\newtheorem{defn}{Definition}
\newtheorem{thm}{{\cal T}heorem}
\newtheorem{constr}{Construction}
\newtheorem{remark}{Remark}
\newtheorem{example}{Example}
\newcommand{\ndiv}{\hspace{-4pt}\not|\hspace{2pt}}
\begin{document}
	\title{An Embedded Index Code Construction Using Sub-packetization}
\author{
	\IEEEauthorblockN{Shanuja Sasi$^{1}$, Vaneet Aggarwal$^{2}$, and B. Sundar Rajan$^{1}$}
	\IEEEauthorblockA{ $^{1}$ Indian Institute of Science, Bengaluru $^{2}$ Purdue University, Indiana\\
		E-mail: shanuja@iisc.ac.in, vaneet@purdue.edu, bsrajan@iisc.ac.in}
}
\maketitle 

\begin{abstract}
	A variant of the index coding problem (ICP), the embedded index coding problem (EICP) was introduced in [A. Porter and M. Wootters, ``Embedded Index Coding,'' ITW, Sweden, 2019] which was motivated by its application in distributed computing where every user can act as sender for other users and an algorithm for code construction was reported. The construction depends on the computation of min-rank of a matrix, which is computationally intensive. In [A.A. Mahesh, N. S. Karat and B. S. Rajan, ``Min-rank of Embedded Index Coding Problems,'' ISIT, 2020], the authors have provided an explicit code construction for a class of EICP - \textit{Consecutive and Symmetric Embedded Index Coding Problem (CS-EICP)}. We introduce the idea of sub-packetization of the messages in index coding problems to provide a novel code construction for CS-EICP in contrast to the scalar linear solutions provided in the prior works. For CS-EICP, the normalized rate, which is defined as the number of bits transmitted by all the users together normalized by the total number of bits of all the messages, for our construction is lesser than the normalized rate achieved by Mahesh \textit{et al.}, for scalar linear codes.
\end{abstract}
\section{Introduction}

Index coding problem (ICP) is a canonical problem in network information theory, that provides a simple yet rich model for  several important engineering
problems in network communication, such as content broadcasting, peer-to-peer
communication, distributed caching, device-to-device relaying, distributed storage, and interference management \cite{birkCol2016ICP,ji2014caching,maleki2014index,mazumdar2014duality,luo2019coded}. The authors of \cite{potter2019eic} introduced a variant of ICP, called embedded index coding problem (EICP), where each node can be both sender and user at the same time. This problem is motivated by applications in distributed computation and distributed storage. It is a special case of multi-sender ICP \cite{li2018cooperative,arunachala2019optimal,Ong2019MultiSenderRank},  where the set of  users and senders are the same. It has got application in vehicular ad-hoc networks (VANETs) which have gained popularity with their importance in intelligent transport systems \cite{mac2018v2v}. In \cite{jessy2020v2v}, scalar linear index coding techniques have been applied to reduce the number of transmissions required for data exchange during the Vehicle to Vehicle (V2V) communication phase which is an integral part of collaborative message dissemination in VANETs.

EICP  consists of a set of users where each user already has a subset of messages and demands another subset of messages.  Each user is fully aware of the content available at all other users and  can communicate to all its peers through an error-free broadcast channel. The goal is to minimize the number of bits transmitted by all the users such that each user retrieves whatever they have demanded. There are no separate senders involved in this setting. Some results establishing relationships between single sender (centralized) index coding and EICP have been provided in \cite{potter2019eic}. In particular, it is shown that, the  optimal code length for an EICP is only a factor of two  worse than the optimal code length for a single sender index coding problem with the same setting. A heuristic algorithm has also been proposed for EICP. In \cite{anjana2020eic}, for EICP, a notion of side-information matrix was introduced. The length of an optimal scalar linear index code was derived to be equal to the min-rank of the side-information matrix. 


In this paper, we consider a specific class of embedded index coding problem, defined as \textit{Consecutive and Symmetric Embedded Index Coding Problem (CS-EICP)}. 
We assume that the cardinality of the side-information is same for all the users. The {\it normalized rate} is defined as the total number of bits transmitted by all the users together normalized by the total bits of all the messages. 

 In \cite{potter2019eic}, the proposed heuristic algorithm for EICP involves calculating min-rank of a graph, by searching over all possible fitting matrices, which is computationally complex. 
 In \cite{anjana2020eic}, the CS-EICP was studied as `one-sided neighboring side-information problem'. The authors had characterized the length of the optimal scalar linear index code for CS-EICP to be $N-s+1$, where $N$ represents the number of users as well as messages and $s$ represents the cardinality of side-information available at each user. A scalar linear code achieving this length was also constructed. Hence the  normalized rate is $\frac{N-s+1}{N}$.
In this paper, we provide an explicit code construction for the CS-EICP by appropriately invoking sub-packetization of the messages. The normalized rate achieved in our scheme is   $\frac{1}{\left \lceil \frac{s}{N-s} \right \rceil}$, if $s> \frac{N}{2}$ and 
$\frac{	\left \lceil \frac{N-s}{s-1} \right \rceil }{1+ 	\left \lceil \frac{N-s}{s-1} \right \rceil }$,  if $s\leq \frac{N}{2}$. For certain ranges of values of $s$, we prove that it is less than $\frac{N-s+1}{N}$.  

 One of the special cases of EICP is when the users demand all the messages which are not in the side-information. This special case was studied as Cooperative Data Exchange (CDE) problem  in \cite{sadegi2010CDE}, where there is a set of $M$ messages and $N$ users which demand the
whole message set. Each user already has a subset of the
messages available as side-information. Upper and lower bounds on the minimum
number of transmissions are provided in \cite{sadegi2010CDE}. For the
case when all the users have the same number of messages, i.e. $s$, as
side-information, the lower bound on the number of transmissions required is $ M - s+ 1$, i.e., the normalized rate is lower bounded by $\frac{M-s+1}{M}$. If our scheme is specialized to CDE problem, then the normalized rate achieved in our scheme is $\frac{1}{\left \lceil \frac{s}{N-s} \right \rceil}$, if $s> \frac{N}{2}$ and 
$\frac{	\left \lceil \frac{N-s}{s-1} \right \rceil }{1+ 	\left \lceil \frac{N-s}{s-1} \right \rceil }$,  if $s\leq \frac{N}{2}$. Here also, for some cases, we prove that it is less than $\frac{M-s+1}{M}$.

\subsection{Vector linear code and sub-packetization scheme.}

An index coding scheme is said to be linear if the transmitted index code symbols are linear combinations of the messages. A scalar linear code uses only one instant of the $ M $ message symbols to obtain the index code symbols whereas a vector linear code uses multiple instants of $ M $ messages to obtain the index code symbols. For example, if the sender uses two instants of $ M $ messages and sends $ n $ linear index code symbols, then it means that $ n $ linear combinations of $ 2M $ messages are broadcast and the code is a vector linear code. 

In sub-packetization scheme that we introduce in this paper for index coding problems, we do not use multiple instants of messages. We use only one instant of the $ M $ message symbols while we split each message of size $d$ bits into $z$ blocks.  We assume that $d$ is sufficiently large such that this splitting of message into $z$ blocks of equal sizes is possible. The size of each block is $d_1=\frac{d}{z}$ bits and each block is assumed to be from a finite field $\mathbb{F}_{2^{d_1}}.$ The coded symbols transmitted are a linear combination of these blocks rather than the linear combination of the entire messages.  Sub-packetization is extensively used and studied in the coded caching literature.

\subsection{Our Contributions}
 The contributions of this paper is summarized as follows.

 \begin{itemize}
 	\item We introduce the idea of sub-packetization in index coding problems to provide code construction for a special class of EICP, namely \textit{Consecutive and Symmetric Embedded Index Coding Problem (CS-EICP)}. 
 	\item We show that, for CS-EICP, the normalized rate achieved in our scheme is  $\frac{1}{\left \lceil \frac{s}{N-s} \right \rceil}$, if $s> \frac{N}{2}$ and $\frac{	\left \lceil \frac{N-s}{s-1} \right \rceil }{1+ 	\left \lceil \frac{N-s}{s-1} \right \rceil }$,  if $s\leq \frac{N}{2}$. We prove that, when $ (s- 1)$ divides $ (N-1)$ or $(N-s)$ divides $(N-1)$ or $s > \frac{2N+1-\sqrt{4N+1}}{2}$, this is less than the normalized rate $\frac{N-s+1}{N}$ achieved in \cite{anjana2020eic} using scalar linear code, where $N$ represents the number of users as well as messages and $s$ represents the cardinality of side-information available at each user.
 	\item One of the special cases of EICP is when it is specialized to cooperative data exchange problem. For such cases also, the normalized rate achieved in our case is $\frac{1}{\left \lceil \frac{s}{N-s} \right \rceil}$, if $s> \frac{N}{2}$ and 
 	$\frac{	\left \lceil \frac{N-s}{s-1} \right \rceil }{1+ 	\left \lceil \frac{N-s}{s-1} \right \rceil }$,  if $s \leq \frac{N}{2}$. 
 	We prove that, when $ (s- 1)$ divides $ (N-1)$ or $(N-s)$ divides $(N-1)$ or $s > \frac{2N+1-\sqrt{4N+1}}{2}$, this
 	 is less than the lower bound on the normalized rate, which is $\frac{M-s+1}{M}$, for scalar linear solutions to CDE problem \cite{sadegi2010CDE}.
 \end{itemize}
The rest of the paper is organized as follows. The background and preliminaries are provided in Section \ref{sec: prob}. In Section \ref{sec: main}, we define the specific class of EICP considered in this paper, namely, \textit{Consecutive and Symmetric Embedded Index Coding Problem (CS-EICP)}. Our main result is summarized in the same section. Comparison of our results with the prior works is also done in the same section.
The proof of this result is deferred to Section \ref{sec achievability}. 
 Section \ref{conclusion} concludes this paper.

{\it Notations:}  The finite field with $q$ elements is denoted  by $\mathbb{F}_q.$ The set of all integers is denoted by $\mathbb{Z}.$ $[n]$ represents the set $\{1,2, \ldots , n\}$. $[a,b]$ represents the set $\{ a, a+1, \ldots, b \}$, and $(a,b]$ represents the set $\{ a+1, \ldots, b \}$.  The bit wise exclusive OR (XOR) operation is denoted by $\oplus.$  $\lfloor x \rfloor$ denotes the largest integer smaller or equal than $x$. $\lceil x \rceil$ denotes the smallest integer greater than or equal than $x$. All the message indices are taken modulo $M$ while the user indices are taken modulo $N$. $a|b$ implies $a$ divides $b$ and $a\ndiv b$ implies $a$ does not divide $b$, for integers $a$ and $b$.

\section{Background and Preliminaries} \label{sec: prob}

Consider a system consisting of $N$ users $$S=\{S_0,S_1, \ldots,S_{N-1}\}$$ and  $M$ messages of $d$ bits each, $$X=\{x_0,x_1,...,x_{M-1}\}, x_l \in \mathbb{F}_{2^{d}} , \forall l \in [0,M-1].$$
Let $K_j \subseteq X$ represent the subset of messages held by the user $S_j$  and $W_j \subseteq X$ represent the subset of messages demanded by the user $S_j$, $j \in [0,N-1]$. We assume that $\cup_{j \in [0,N-1]}K_j=X$.
Each user $S_j$ broadcasts a set of $y_j$ coded symbols each of size $d_1=\frac{d}{z}$ bits, for some $z \in \mathbb{Z}$. Let $\mathcal{Y}_j, j \in [0,N-1],$ represent the set of all coded symbols transmitted by the user $S_j$,
$$\mathcal{Y}_j = \cup_{i=1}^{y_j} Y_j^{i} ,  \text{ : } Y_j^{i} \in \mathbb{F}_{2^{d_1}} ,$$
where $Y_j^i, i \in [y_j ], $ represents the  $i^{th}$ coded symbol of length $d_1$ bits, transmitted by the user $S_j$.

The \textit{embedded index coding problem (EICP)} \cite{potter2019eic} is to minimize the number of bits broadcast
by all users such that each user gets all the messages they have demanded, from the messages available with them and the coded symbols broadcast by the other users. That is, to minimize the \textit{normalized rate}, which is defined as the total number of bits broadcast by all the users together normalized by the total bits of all the messages. 

The decoding function, for embedded index coding problem, associated with some user $S_j$, is of the form $$D_j :  \{\cup_{i \in \{[0,N-1] \backslash j\}}\mathbb{F}_{2^{y_i d_1}} ,  \mathbb{F}_{2^{|K_j|d}}\} \rightarrow \mathbb{F}_{2^{|W_j|d}}.$$ 

%
\section{Consecutive and Symmetric Embedded Index Coding Problem} \label{sec: main}
In this section, we define the specific class of EICP considered in this paper, in Definition \ref{def: CS-EICP}. We  summarize our key result subsequently in Theorem \ref{thm: main result}. The proof of Theorem \ref{thm: main result}  is provided in Section \ref{sec achievability}. We compare our results with that in \cite{potter2019eic,anjana2020eic} and \cite{sadegi2010CDE}. We also illustrate our results using some examples. 

\begin{defn} \label{def: CS-EICP}
	\textbf{Consecutive and Symmetric Embedded Index Coding Problem (CS-EICP)}: 
	An EICP is said to be Consecutive and Symmetric Embedded Index Coding Problem if the	side-information of each user $S_j, j\in[0,N-1],$ can be expressed as \\ $K_j= \{x_{(j+a) \text{ mod } M},x_{(j+a+1) \text{ mod } M}, ... ,x_{(j+a+s-1) \text{ mod } M}\} $, for some $a \in [0,M-1],s\in [1,M] $.
\end{defn}

\subsection{Main Result}
Without loss of generality, let the side-information set of each user $S_j, j \in [0,N-1]$, for CS-EICP, be $ K_j=\{x_{j},x_{(j+1) \text{ mod } M},\ldots , x_{(j+s-1) \text{ mod } M}\},$ for some $s \in [1,M]$. 

\begin{thm} \label{thm: main result}
 For any CS-EICP, with $M=N$, $s \in [2,N-1]$, and demand set of each user $S_j, j\in[0,N-1],$ expressed as $W_j \subseteq X \backslash K_j $,  the following normalized rate is achievable by using sub-packetization:
	\begin{equation}
	\mathcal{C}(s)=
	\begin{cases}
	 \frac{1}{\left \lceil \frac{s}{N-s} \right \rceil} , & \text{ if $s >\frac{N}{2}$}. \\
	 \frac{	\left \lceil \frac{N-s}{s-1} \right \rceil }{1+ 	\left \lceil \frac{N-s}{s-1} \right \rceil } , & \text{ otherwise}. 
	\end{cases}
	\end{equation}	
\end{thm}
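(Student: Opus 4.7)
The plan is to give an explicit sub-packetization construction in each of the two regimes $s>N/2$ and $s\le N/2$, and then verify that it achieves $\mathcal{C}(s)$ and that every user decodes. In both regimes each message $x_i$ will be split into $z$ blocks of $d/z$ bits (treated as elements of $\mathbb{F}_{2^{d/z}}$), and each user $S_j$ will transmit $y$ coded symbols, each an $\mathbb{F}_{2^{d/z}}$-linear combination of blocks drawn only from $K_j$. Since the total broadcast is $Ny$ symbols of $d/z$ bits, the normalized rate is $y/z$, so the first step is to observe that $(y,z)=(1,\lceil s/(N-s)\rceil)$ yields the first branch of $\mathcal{C}(s)$ and $(y,z)=(\lceil (N-s)/(s-1)\rceil,\,1+\lceil (N-s)/(s-1)\rceil)$ yields the second. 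Decodability is what needs work.

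For the regime $s>N/2$ I would set $k=\lceil s/(N-s)\rceil$ and stride $\delta=N-s$, and define
\[
T_j \;=\; \bigoplus_{\ell=0}^{k-1} x_{(j+\ell\delta)\bmod N}^{(\ell+1)},\qquad j\in[0,N-1].
\]
The inequality $k\le (N-1)/(N-s)$, which is the ceiling bound rewritten, gives $(k-1)\delta\le s-1$, so every summand of $T_j$ really lies in $K_j$. By the cyclic symmetry of the construction it suffices to show that $S_0$ recovers every demanded block $x_l^{(\beta)}$ with $l\in[s,N-1]$ and $\beta\in[k]$. I would argue this via the map $(l,\beta)\mapsto j:=(l-(\beta-1)\delta)\bmod N$, showing that it is an injection into $[1,N-1]$ and that for the associated $j$ the summand $x_l^{(\beta)}$ is the unique entry of $T_j$ falling outside $K_0$ --- the other $k-1$ summands sit in the window $[0,s-1]$ by the same inequality $(k-1)\delta\le s-1$. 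Then $S_0$ just XORs off the blocks it already holds to recover $x_l^{(\beta)}$.

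For the regime $s\le N/2$ I would mirror the same philosophy but with $k'=\lceil(N-s)/(s-1)\rceil$ transmissions per user, each an XOR over all $s$ messages of $K_j$, with the block index used in the $\ell$-th round chosen analogously to $(\ell+1)$ in the first branch. Now $s$ is small, so no single transmission is decodable by very many users. The inequality $(k'-1)(s-1)<N-s\le k'(s-1)$ plays the role that $(k-1)(N-s)<s\le k(N-s)$ played above, and is what ensures that, across all $k'$ rounds, every demanded block of $S_0$ appears as the lone unrecovered summand in some transmission, provided decoding is carried out by successive peeling: a first batch of transmissions whose unknown-to-$S_0$ summand sits near the boundary of the side-information window deliver "seed" blocks, and each remaining transmission is reduced to a single-unknown equation using blocks already recovered from the earlier rounds.

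The rate calculation in both branches is immediate; the substance is the decodability argument sketched above. The first branch reduces to a clean modular-arithmetic bijection, which I expect to be a short verification. The main obstacle, and the place where care is needed, is the second branch: because $|K_j\cap K_0|$ is often strictly less than $s-1$, no single transmission decodes by itself, and one must specify the peeling order explicitly and prove that it terminates with all $(N-s)z$ demanded blocks of $S_0$ recovered, while also absorbing the slack that arises when $(s-1)\nmid(N-s)$ (which forces a few transmissions to be almost redundant). Handling this combinatorial bookkeeping, and then invoking cyclic symmetry to lift the argument from $S_0$ to every user, is the crux of the proof.
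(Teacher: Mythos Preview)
Your Case~A construction and decoding argument are exactly the paper's (up to relabelling the block indices $0,\dots,k-1$ versus $1,\dots,k$): same stride $\delta=N-s$, same single transmission per user, same single-unknown decoding via the inequality $(k-1)(N-s)\le s-1$.

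For Case~B your plan diverges from the paper, and the divergence sits precisely at the point you flag as the obstacle. You propose that each transmission be an XOR over \emph{all} $s$ messages in $K_j$ and then recover blocks by successive peeling. The paper instead makes every coded symbol an XOR of only \emph{two} blocks. With $z=1+\lceil(N-s)/(s-1)\rceil$, for each $i\in[0,N-1]$ and $k\in[0,z-2]$ the user $S_{(k(s-1)+i)\bmod N}$ sends
\[
Y_k^i \;=\; x_{(k(s-1)+i)\bmod N}^{\,k}\ \oplus\ x_{((k+1)(s-1)+i)\bmod N}^{\,k+1},
\]
so the stride is $s-1$ and each symbol links two adjacent positions in a chain. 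Decoding is then not peeling but a one-line telescoping sum: to recover $x_l^{i_1}$, user $S_j$ XORs a consecutive run $\bigoplus_{k}Y_k^{l_1}$ so that intermediate terms cancel, leaving $x_l^{i_1}\oplus x_{(l\pm t(s-1))\bmod N}^{\,i_1\pm t}$ for the smallest $t$ (forward or backward) that lands in $K_j$; the choice of $z$ is exactly what guarantees such a $t$ always exists with a valid block index.

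Your all-$s$-XOR scheme may be salvageable, but as written it has a gap: you never specify which block index each of the $s$ messages contributes in round $\ell$ (the phrase ``analogously to $(\ell+1)$'' does not pin this down, since in Case~A the block index varied with position in the sum, not with a round number), and the peeling order and its termination are only asserted. The paper's pairwise construction sidesteps this entirely---because every symbol has only two summands, the ``peeling'' degenerates to a telescoping identity and the combinatorial bookkeeping you anticipate never arises.
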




\subsection{Comparison with the results in \cite{potter2019eic} and \cite{anjana2020eic}}
 In \cite{potter2019eic}, a heuristic algorithm, which provides a scalar linear solution for the EICP, had been provided which involves calculating computationally complex min-rank of a graph.  In \cite{anjana2020eic}, a scalar linear code achieving the length $N-s+1$ was constructed explicitly in contrast to the computationally complex algorithm presented in \cite{potter2019eic} to find a scalar linear solution. We prove in Theorem \ref{thm3} that for some range of values of $s$, the normalized rate achieved in our scheme, as in Theorem \ref{thm: main result}, using sub-packetization is lower than the normalized rate achieved in \cite{anjana2020eic} .
\begin{thm}
	\label{thm3}
	For any CS-EICP, with $M=N$, and demand set of each user $S_j, j\in[0,N-1],$ expressed as $W_j \subseteq X \backslash K_j $, when $ (s- 1)| (N-1)$ or $(N-s)|(N-1)$ or $ \frac{2N+1-\sqrt{4N+1}}{2} <s<N$, the normalized rate achieved in our scheme, as in Theorem \ref{thm: main result}, using sub-packetization is lower than the normalized rate $\frac{N-s+1}{N}$ achieved in \cite{anjana2020eic} using scalar linear index code.
\end{thm}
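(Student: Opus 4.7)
I plan to verify $\mathcal{C}(s) < \tfrac{N-s+1}{N}$ directly under each of the three sufficient conditions. A useful preparatory step is to recast the third condition: since $s \leq N-1$ makes $2(N-s)+1 \geq 3 > 0$, squaring the equivalent inequality $\sqrt{4N+1} > 2(N-s)+1$ shows that $s > \tfrac{2N+1-\sqrt{4N+1}}{2}$ is the same as
\[
N > (N-s)(N-s+1),
\]
and for $N \geq 3$ this already forces $s > N/2$. Likewise, $(N-s)\mid(N-1)$ with $s\geq 2$ forces $s > N/2$, since otherwise the positive-integer quotient $(N-1)/(N-s)$ would fall strictly below $2$ and pin $s=1$. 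Only $(s-1)\mid(N-1)$ can fall into either branch of $\mathcal{C}(s)$ from Theorem~\ref{thm: main result}.

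In the branch $s \leq N/2$ only the first condition can arise. Writing $N-1 = k(s-1)$ makes $(N-s)/(s-1) = k-1$ an integer, so $\mathcal{C}(s) = (k-1)/k$. Cross-multiplying $\tfrac{k-1}{k} < \tfrac{N-s+1}{N}$ and substituting $N = k(s-1)+1$ and $N-s+1 = (k-1)(s-1)+1$ collapses the inequality to the trivial $k-1 < k$.

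In the branch $s > N/2$ I would set $t = N-s$ and write $s = (k-1)t + r'$ with $k = \lceil s/t \rceil$ and $r' \in \{1, \dots, t\}$; a short computation shows $\mathcal{C}(s) = 1/k < (t+1)/N$ is equivalent to $r' < k$. Under $(N-s)\mid(N-1)$, writing $N-1 = mt$ gives $s = (m-1)t+1$, so $k = m$, $r' = 1$, and the needed $m \geq 2$ follows from $s \geq 2$. Under $(s-1)\mid(N-1)$ in this branch, $s > N/2$ forces the quotient $(N-1)/(s-1)$ to equal $2$, so $N$ is odd, $s=(N+1)/2$, $k=2$, and $r'=1$. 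Finally, the third condition $N > t(t+1)$ rewrites as $(k-1)t + r' > t^2$, i.e., $r' > t(t-k+1)$; since $r' \leq t$, this can hold only when $t-k+1 < 1$, giving $k > t \geq r'$ and hence $r' < k$.

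The main obstacle is the book-keeping: the three conditions overlap and interact differently with the two branches of $\mathcal{C}(s)$. The substitution $s = (k-1)t+r'$ in the $s > N/2$ branch and the recasting $N > (N-s)(N-s+1)$ of the third condition are the two key manoeuvres; once both are in place, each sub-case reduces to a one-line algebraic check.
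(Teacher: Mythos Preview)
Your proof is correct. Both you and the paper use the same overall decomposition—intersecting the three sufficient conditions with the two branches of $\mathcal{C}(s)$—but the algebraic tactics differ. In the two divisibility cases the paper simplifies the ceiling directly to the closed form $\mathcal{C}(s)=(N-s)/(N-1)$ and then checks $(N-s+1)/N > (N-s)/(N-1)$ by a single cross-multiplication; your parametrisations $N-1=k(s-1)$ and $N-1=mt$ reach the same conclusion through slightly different bookkeeping. The more substantive difference is in the third condition: the paper argues by contradiction, deriving $(N-s)^2 \ge s$ and then invoking the quadratic formula to obtain $s \le \tfrac{2N+1-\sqrt{4N+1}}{2}$, whereas you recast the hypothesis at the outset as $N>(N-s)(N-s+1)$ and, via the division-with-remainder $s=(k-1)t+r'$, reduce every sub-case of the $s>N/2$ branch to the single criterion $r'<k$. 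Your route is a little more elementary (no quadratic formula, no contradiction) and unifies the three $s>N/2$ sub-cases under one inequality; the paper's route has the virtue of exhibiting the explicit value $(N-s)/(N-1)$, which makes the gap with $(N-s+1)/N$ visibly of order $1/N$.
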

\begin{proof}
	The length of the scalar linear code constructed in \cite{anjana2020eic} is $ N - s+ 1$, i.e., the normalized rate is $\frac{N-s+1}{N}.$ In our scheme, the normalized rate as described in Theorem \ref{thm: main result} is achievable. 
	
	{\it Case 1:} When $ (s- 1)| (N-1)$.
	
	For this case, if $s \leq \frac{N}{2}$, the normalized rate achieved in our scheme is
	\begin{align*}
	\mathcal{C}(s) &=\frac{	\left \lceil \frac{N-s}{s-1} \right \rceil }{1+ 	\left \lceil \frac{N-s}{s-1} \right \rceil }
	=\frac{	\left \lceil \frac{N-1}{s-1}-1 \right \rceil }{1+ 	\left \lceil \frac{N-1}{s-1}-1 \right \rceil  }
	\\&=\frac{ \frac{N-1}{s-1}-1 }{1+ \frac{N-1}{s-1}-1}
	=\frac{ \frac{N-s}{s-1}  }{	 \frac{N-1}{s-1}  }=\frac{N-s}{N-1}.
	\end{align*}
	Also, we have
		\begin{align*}
		N &> N-s+1\\ \Rightarrow
		N +N(N-s)&> N-s+1+N(N-s)\\
	\Rightarrow	(N-s+1)(N-1)&>N(N-s)\\
	\Rightarrow	\frac{N-s+1}{N}&>\frac{N-s}{N-1}.
		\end{align*}	
	If $s > \frac{N}{2}$, {\it Case 1} is true only when $s=\frac{N+1}{2}$, where $N$ is odd. For such cases, the normalized rate achieved in our scheme is
	\begin{align*}
	\mathcal{C}(s) = \frac{1}{\left \lceil \frac{s}{N-s} \right \rceil}= \frac{1}{\left \lceil \frac{\frac{N+1}{2}}{N-\frac{N+1}{2}} \right \rceil}=\frac{1}{\left \lceil \frac{N+1}{N-1} \right \rceil} =\frac{1}{2}.
	\end{align*}
	If $s=\frac{N+1}{2}$, where $N$ is odd, 
	\begin{align*}
		\frac{N-s+1}{N} =\frac{\frac{N-1}{2}+1}{N} =\frac{N+1}{2N}=\left(\frac{1}{2} +\frac{1}{N}\right) \left (>\frac{1}{2} \right)
	\end{align*}
	
		{\it Case 2:} When $(N-s)|(N-1)$.
		
			For this case the normalized rate achieved in our scheme is
		\begin{align*}
		\mathcal{C}(s) = \frac{1}{\left \lceil \frac{s}{N-s} \right \rceil}= \frac{1}{\left \lceil \frac{N}{N-s}-1 \right \rceil}=\frac{1}{ \frac{N-1}{N-s}+1-1}=\frac{N-s}{ N-1}.
		\end{align*}
	Hence, $\frac{N-s+1}{N} >\frac{N-s}{N-1}$, as it is proved in {\it Case 1}.
	
	{\it Case 3:} When $ \frac{2N+1-\sqrt{4N+1}}{2} <s<N$.
	
	We observe that $\frac{2N+1-\sqrt{4N+1}}{2} \geq \frac{N}{2}$ since,
	\begin{align*}
	N^2  &\geq 2N\\
	\Rightarrow N^2 +2N+1 &\geq 4N+1
	\Rightarrow (N+1)^2 \geq 4N+1\\
	\Rightarrow N+1 &\geq \sqrt{(4N+1)}\\
	\Rightarrow 2N+1- \sqrt{(4N+1)} &\geq N\\
	\Rightarrow \frac{2N+1- \sqrt{(4N+1)}}{2} &\geq \frac{N}{2}.
	\end{align*}
	Let us assume that for this case, the length of the scalar linear code as in \cite{anjana2020eic} is less than or equal to that achieved in our scheme, i.e., $ \frac{N-s+1}{N} \leq \frac{1}{\left \lceil \frac{s}{N-s} \right \rceil }$. Then we have,
		\begin{align}
		\frac{N-s+1}{N} &\leq \frac{1}{\left \lceil \frac{s}{N-s} \right \rceil } \\
		\Rightarrow \frac{N}{N-s+1} &\geq \left \lceil \frac{s}{N-s} \right \rceil \\
		\Rightarrow \frac{N}{N-s+1} &\geq \frac{s}{N-s} \nonumber\\
		\Rightarrow 	N(N-s) &\geq  s(N-s)+s  \\
		\Rightarrow 	(N-s)^2&\geq s \\
		\Rightarrow N^2 -2Ns+s^2 &\geq s  \\
		\Rightarrow s^2 -(2N+1)s+N^2 &\geq 0 
		\end{align}
		\begin{align}
		\Rightarrow \left ( s-\left ( \frac{2N+1+\sqrt{4N+1}}{2} \right )  \right )\times \\
		\left ( s-\left ( \frac{2N+1-\sqrt{4N+1}}{2} \right )  \right ) &\geq 0 \nonumber\\
		\Rightarrow (s-s_1)(s-s_2) &\geq 0, \label{eq one}
		\end{align}
\noindent where $s_1=\frac{2N+1+\sqrt{4N+1}}{2}$ and $s_2=\frac{2N+1-\sqrt{4N+1}}{2}$.
	Eq. (\ref{eq one}) implies either $s>s_1$ or $ s<s_2.$
	We observe that the value of $s$ cannot be greater than $s_1$ (since $s_1 > N$), which implies $s<s_1$. So, the only possible solution to Eq. (\ref{eq one}) is $s<s_2, $ which contradicts our assumption that $s>	\frac{2N+1-\sqrt{4N+1}}{2}$. Hence $ \frac{N-s+1}{N} >  \frac{1}{\left \lceil \frac{s}{N-s} \right \rceil }$.
	
	Therefore, we are able to achieve a normalized rate lower than the normalized rate achieved in \cite{anjana2020eic} when $ (s- 1)| (N-1)$ or $(N-s)|(N-1)$ or $s > \frac{2N+1-\sqrt{4N+1}}{2}$. 
\end{proof}
	\begin{remark}
		For those ranges of values of $s$ which are not discussed in Theorem \ref{thm3}, i.e., when $ (s- 1) \ndiv (N-1)$, $(N-s) \ndiv (N-1)$ and $ \frac{N}{2} < s \leq\frac{2N+1-\sqrt{4N+1}}{2}$, we conjecture that the normalized rate achieved in our scheme, as in Theorem \ref{thm: main result}, using the idea of sub-packetization is lower than the normalized rate $\frac{N-s+1}{N}$ achieved in \cite{anjana2020eic} using scalar linear index code.
		\end{remark}
\begin{remark}
	One of the special cases of EICP, when the users demand all the messages which are not available as side-information, was studied as Cooperative Data Exchange (CDE) problem  in \cite{sadegi2010CDE}. A lower bound on the minimum
	number of transmissions, provided in \cite{sadegi2010CDE} for the case when all the users have the same number of messages, i.e. $s$, as side-information, is $ M - s+ 1$, i.e., the normalized rate is lower bounded by $\frac{M-s+1}{M}$. If our scheme is specialized to CDE problem, when $ (s- 1)| (N-1)$ or $(N-s)|(N-1)$ or $ \frac{2N+1-\sqrt{4N+1}}{2} <s<N$,  the normalized rate achieved in our scheme, as in Theorem \ref{thm: main result}, using sub-packetization is lower than the lower bound on the normalized rate provided in \cite{sadegi2010CDE} (as proved in Theorem \ref{thm3}).
\end{remark}
The following three examples illustrate Theorem \ref{thm: main result} and also the idea of sub-packetization that is invoked in the proof.
\begin{example} \label{exp T1C1}
	Let $N=5, M=5, s=3$. Thus  we have five messages $\{x_0,x_1,x_2,x_3,x_4\}$, each of size $d$ bits,  and five users $\{S_0,S_1,S_2,S_3,S_4\}$. Let the  side-information set and the demand set corresponding to each user $S_j, j \in [0,4],$ be $ K_{j} = \{x_j,x_{(j+1) \text{ mod } 5},x_{(j+2) \text{ mod } 5}\}$ and $W_j =\{x_{(j+3) \text{ mod } 5}\}$ respectively.
		\begin{align*}
	K_{0} &= \{x_0,x_1,x_2\}&
	K_{1} &= \{x_1,x_2,x_3\}&
	K_{2} &= \{x_2,x_3,x_4\}\\
	K_{3} &= \{x_3,x_4,x_0\}&
	K_{4} &= \{x_4,x_0,x_1\}
	\end{align*}
	\begin{align*}
	W_{0}&=\{x_3\}&
	W_{1}&=\{x_4\}&
	W_{2}&=\{x_0\}\\
	W_{3}&=\{x_1\}&
	W_{4}&=\{x_2\}
	\end{align*}
	We split each message into two disjoint blocks each of size $\frac{d}{2}$ bits, i.e.,
	\begin{align*}
	x_0 &=\{x_0^0,x_0^1\}&
	x_1 &=\{x_1^0,x_1^1\}&
	x_2 &=\{x_2^0,x_2^1\}\\
	x_3 &=\{x_3^0,x_3^1\}&
	x_4 &=\{x_4^0,x_4^1\}
	\end{align*}	
	The coded symbols transmitted are linear combinations of these blocks. Each user $S_{h}, h \in [0,4],$ transmits one coded symbol $Y_h$ which includes $2$ messages taken at an interval of $2$. The $0^{th}$ block of the first message is taken while the $1^{st}$ block of the second message is taken. That is,
	for each $h \in [0,4]$, the user $S_h$ transmits $Y_h = x_{h}^0 \oplus x_{(h+2) \text{ mod } 5}^1$. The transmitted coded symbols are 
	 \begin{align*}
	Y_0 &= x_0^0 \oplus x_{2}^1&
	Y_1 &= x_1^0 \oplus x_{3}^1&
	Y_2 &= x_2^0 \oplus x_{4}^1\\
	Y_3 &= x_3^0 \oplus x_{0}^1&
	Y_4 &= x_4^0 \oplus x_{1}^1.
	\end{align*}	
	Now, each user $S_j$ needs to retrieve the demanded message $x_{(j+3) \text{ mod } 5}.$  Let us first consider the user $S_0.$  The user $S_0 $ retrieves $x_{3}^{0}$ from $Y_3$ since $x_{0}$ is available as side-information while it retrieves $x_{3}^{1}$ from $Y_1$. The user $S_0$ has decoded the  message $x_3$ since it has retrieved all the blocks corresponding to the message $x_3$. Similarly all other users can decode their demanded message.
	Table \ref{tab1 2} illustrates the coded symbols transmitted by each user and the coded symbols from which each user retrieves all the blocks corresponding to the demanded message. It can be noted from Table \ref{tab1 2} that each user transmits $\frac{d}{2}$ bits owing to a normalized rate of $\frac{1}{2}$.  The minimum number of bits required to transmit is $3d$ bits in  \cite{potter2019eic,anjana2020eic} while we were able to reduce it to $2.5d$ bits by utilizing the sub-packetization.
\end{example}

\begin{table*}
	\begin{center}
		\begin{tabular}{ |p{1cm}| p{2.5cm}|p{2.5cm}|p{3cm} | p{4cm} |}
			\hline
			
			\textit{User $S_i$}& \textit{Coded symbols transmitted by $S_i$}& \textit{Message  demanded by $S_i$: $x_j$} & \textit{Message blocks corresponding to the message $x_j$} & \textit{Coded symbols from which the message blocks are decoded by $S_i$}   \\ \hline 
			\hline
			$S_0 $& $Y_0=x_0^0 \oplus x_2^1$  & $x_3$	& $x_3^0$ & $Y_3$  \\ \cline{4-5}
			& &&$x_3^1$ & $Y_1$  \\ \hline
			
			$S_1 $ & $Y_1=x_1^0 \oplus x_3^1$&  $x_4$	& $x_4^0$ & $Y_4$  \\ \cline{4-5}
			& &&$x_4^1$ & $Y_2$  \\ \hline
			
			$S_2 $ & $Y_2=x_2^0 \oplus x_4^1$& $x_0$	& $x_0^0$ & $Y_0$  \\ \cline{4-5}
			& &&$x_0^1$ & $Y_3$  \\ \hline
			$S_3 $ & $Y_3=x_3^0 \oplus x_0^1$&  $x_1$	& $x_1^0$ & $Y_1$  \\ \cline{4-5}
			& &&$x_1^1$ & $Y_4$  \\ \hline
			
			$S_4 $ & $Y_4=x_4^0 \oplus x_1^1$ & $x_2$	& $x_2^0$ & $Y_2$  \\ \cline{4-5}
			& &&$x_2^1$ & $Y_0$  \\ \hline
			
		\end{tabular}
	\end{center}
	\caption{Table that illustrates the decoding done by each user in Example \ref{exp T1C1}}
	\label{tab1 2}
\end{table*}
\begin{example} \label{exp T1C2}
	Let us take an  example for the case when $s \leq \frac{N}{2}$ in Theorem \ref{thm: main result}. Let $N=M=4,s=2$ and the set of all messages and users be $\{x_0,x_1,x_2,x_3\}$ and $\{S_0,S_1,S_2,S_3\}$ respectively. Let the side-information set and the demand set corresponding to each user $S_j, j \in [0,3]$ be $ K_{j} = \{x_j,x_{(j+1) \text{ mod } 4}\}$ and $W_j =\{x_{(j+2) \text{ mod } 4}\}$ respectively.
		\begin{align*}
	K_{0} &= \{x_0,x_1\}&
	K_{1} &= \{x_1,x_2\}
\\
	K_{2} &= \{x_2,x_3\}&
	K_{3} &= \{x_3,x_0\}
	\end{align*}
	\begin{align*}
	W_{0}&=\{x_2\}&
	W_{1}&=\{x_3\}&
	W_{2}&=\{x_0\}&
	W_{3}&=\{x_1\}
	\end{align*}
	 We split each message into three blocks of equal sizes, $x_j =\{x_j^0,x_j^1,x_j^2\}, j \in[0,3]$.
	 	\begin{align*}
	 x_0 &=\{x_0^0,x_0^1,x_0^2\}&
	 x_1 &=\{x_1^0,x_1^1,x_1^2\}\\
	 x_2 &=\{x_2^0,x_2^1,x_2^2\}&
	 x_3 &=\{x_3^0,x_3^1,x_3^2\}
	 \end{align*}
	 	The coded symbols transmitted are linear combinations of these blocks. For this case, the coded symbols are obtained in $4$ iterations. For each iteration $h \in [0,3] $,  the users $S_{h}$ and $S_{(h+1) \text{ mod } 4} $ are involved in the transmissions, where the coded symbols obtained by each user is by taking the first and the last messages available at each user. The user $S_{h}$ transmits one coded symbol $Y_h^0$ where the $0^{th} $ block of the first message and the $1^{st}$ block of the last message available as side information are taken to be included in the coded symbol. 
	The user $S_{(h+1) \text{ mod } 4} $ transmits one coded symbol $Y_h^1$, where the $1^{st} $ block of the first message and the $2^{nd}$ block of the last message available as side information are taken to be included in the coded symbol. That is,
	  for each $h \in [0,3], i \in [0,1], $  the user $S_{(h+i) \text{ mod } 4}$ transmits $Y_h^i = x_{(h+i) \text{ mod } 4}^i \oplus x_{(h+i+1) \text{ mod } 4}^{i+1}$. The coded symbols transmitted are given below.
	   \begin{align*}
	  Y_0^0 &= x_0^0 \oplus x_{1}^1&
	  Y_0^1 &= x_1^1 \oplus x_{2}^2&
	  Y_1^0 &= x_1^0 \oplus x_{2}^1\\
	  Y_1^1 &= x_2^1 \oplus x_{3}^2&
	  Y_2^0 &= x_2^0 \oplus x_{3}^1&
	  Y_2^1 &= x_3^1 \oplus x_{0}^2\\
	  Y_3^0 &= x_3^0 \oplus x_{0}^1&
	  Y_3^1 &= x_0^1 \oplus x_{1}^2
	  \end{align*}
	  Now, each user $S_j$ needs to retrieve the demanded message $x_{(j+2) \text{ mod } 4}.$  Let us first consider the user $S_0.$  The user $S_0 $ retrieves $x_{2}^{0}$ from $Y_2^0 \oplus Y_2^1 =x_2^0 \oplus x_0^2$ since $x_{0}$ is available as side-information while it retrieves $x_{2}^{1}$ and $x_{2}^{2 } $ from $Y_1^0$ and $Y_0^1$ respectively. The user $S_0$ has decoded the  message $x_2$ since it has retrieved all the blocks corresponding to the message $x_2$. Similarly all other users can decode their demanded message.
	  
	   Table \ref{tab2 2} illustrates the coded symbols transmitted by each user and the coded symbols from which each user retrieves all the blocks corresponding to the demanded message. It can be verified from Table \ref{tab2 2}  that the normalized rate in this particular example is $\frac{2}{3}$, since each user transmits $\frac{2d}{3}$ bits. Here the total number of bits transmitted by all the users together is $\frac{8d}{3}$ bits which is less than $3d$ bits required to transmit in  \cite{potter2019eic,anjana2020eic}.
\end{example}

\begin{table*}
	\begin{center}
		\begin{tabular}{ |p{1cm}| p{2.5cm}|p{2.5cm} | p{3cm} | p{4.4cm} |}
			\hline
			
			\textit{User $S_i$}&\textit{Coded symbols transmitted by $S_i$} 	&\textit{Message demanded by $S_i$: $x_j$}& \textit{Message blocks corresponding to the message $x_j$} & \textit{Coded symbols from which the message blocks are decoded by $S_i$}   \\ \hline \hline
			
			&$Y_0^0=x_0^0 \oplus x_1^1$& &$x_2^0$& $Y_2^0 \oplus Y_2^1 $  \\ \cline{4-5}
			$S_0$&$Y_3^1=x_0^1 \oplus x_1^2$ &$x_2$& $x_2^1$ & $Y_1^0$  \\  \cline{4-5}
			& &&$x_2^2$ & $Y_0^1$  \\ \hline
			
			& $Y_1^0=x_1^0 \oplus x_2^1$	& & $x_3^0$& $Y_3^0 \oplus Y_3^1 $  \\  \cline{4-5}
			$S_1$&  $Y_0^1=x_1^1 \oplus x_2^2$&$x_3$& $x_3^1$ & $Y_2^0$  \\  \cline{4-5}
			&& &$x_3^2$ & $Y_1^1$  \\ \hline
			
			&  $Y_2^0=x_2^0 \oplus x_3^1$&	& $x_0^0$ & $Y_0^0 \oplus Y_0^1 $  \\  \cline{4-5}
			$S_2$	& $Y_1^1=x_2^1 \oplus x_3^2$ &$x_0$& $x_0^1$ & $Y_3^0$  \\  \cline{4-5}
			& &&$x_0^2$ & $Y_2^1$  \\ \hline
			
			&	$Y_3^0=x_3^0 \oplus x_0^1$ &	& $x_1^0$ & $Y_1^0 \oplus Y_1^1 $  \\  \cline{4-5}
			$S_3$&  $Y_2^1=x_3^1 \oplus x_0^2$ &$x_1$& $x_1^1$ & $Y_0^0$  \\  \cline{4-5}
			&& &$x_1^2$ & $Y_3^1$  \\ \hline
				
		\end{tabular}
	\end{center}
	\caption{Table that illustrates the decoding done by each user in Example \ref{exp T1C2}}
	\label{tab2 2}
\end{table*}
\begin{example} \label{exp T1m2}
	Consider an  example for  the case where each user demands two messages. Let $N=M=7,s=4$ and the set of all messages and users be $\{x_0,x_1,x_2,x_3,x_4,x_5,x_6\}$ and $\{S_0,S_1,S_2,S_3,S_4,S_5,S_6\}$ respectively. Let the side-information set and the demand set corresponding to each user $S_j, j \in [0,6]$ be $ K_{j} = \{x_j,x_{(j+1) \text{ mod } 7},x_{(j+2) \text{ mod } 7},x_{(j+3) \text{ mod } 7}\}$ and $W_j =\{x_{(j+4) \text{ mod } 7},x_{(j+5) \text{ mod } 7}\}$ respectively.
	\begin{align*}
	K_{0} &= \{x_0,x_1,x_2,x_3\}&
	K_{1} &= \{x_1,x_2,x_3,x_4\}\\
	K_{2} &= \{x_2,x_3,x_4,x_5\}&
	K_{3} &= \{x_3,x_4,x_5,x_6\}\\
	K_{4} &= \{x_4,x_5,x_6,x_0\}&
	K_{5} &= \{x_5,x_6,x_0,x_1\}\\
	K_{6} &= \{x_6,x_0,x_1,x_2\}
	\end{align*}
	\begin{align*}
	W_{0}&=\{x_4,x_5\}&
	W_{1}&=\{x_5,x_6\}&
	W_{2}&=\{x_6,x_0\}\\
	W_{3}&=\{x_0,x_1\}&
	W_{4}&=\{x_1,x_2\}&
	W_{5}&=\{x_2,x_3\}\\
	W_{6}&=\{x_3,x_4\}
	\end{align*} 
We split each message into two blocks, i.e.,
		\begin{align*}
	x_0 &=\{x_0^0,x_0^1\}&
	x_1 &=\{x_1^0,x_1^1\}&
	x_2 &=\{x_2^0,x_2^1\}\\
	x_3 &=\{x_3^0,x_3^1\}&
	x_4 &=\{x_4^0,x_4^1\}&
	x_5 &=\{x_5^0,x_5^1\}\\
	x_6 &=\{x_6^0,x_6^1\}
	\end{align*}
		The coded symbols transmitted are linear combinations of these blocks. Each user $S_{h}, h \in [0,6],$ transmits one coded symbol $Y_h$ which includes $2$ messages taken at an interval of $2$. The $0^{th}$ block of the first message is taken while the $1^{st}$ block of the second message is taken. That is,
	 for each $h \in [0,6]$, the user $S_h$ transmits $Y_h = x_h^0 \oplus x_{(h+2) \text{ mod } 7}^1$.
	  \begin{align*}
	 Y_0 &= x_0^0 \oplus x_{2}^1&
	 Y_1 &= x_1^0 \oplus x_{3}^1&
	 Y_2 &= x_2^0 \oplus x_{4}^1\\
	 Y_3 &= x_3^0 \oplus x_{5}^1&
	 Y_4 &= x_4^0 \oplus x_{6}^1&
	 Y_5 &= x_5^0 \oplus x_{0}^1\\
	 Y_6 &= x_6^0 \oplus x_{1}^1
	 \end{align*}
	  Now, each user $S_j$ needs to retrieve the demanded messages $x_{(j+4) \text{ mod } 7}$ and $x_{(j+5) \text{ mod } 7}$.  Let us first consider the user $S_0.$  The user $S_0 $ retrieves $x_{4}^{0}$ and $x_4^1$ from $Y_4$ and $Y_1$ respectively while it retrieves $x_{5}^{0}$ and $x_{5}^{1} $ from $Y_5$ and $Y_2$ respectively. The user $S_0$ has decoded the  messages $x_4$ and $x_5$ since it has retrieved all the blocks corresponding to those message. Similarly all other users can decode their demanded messages.
	  Table \ref{tab2 m2} illustrate the coded symbols transmitted by each user and the coded symbols from which each user retrieves all the blocks corresponding to the demanded messages.
	   The total number of bits transmitted by all the users together is $\frac{7d}{2}$ bits compared to $ 4d$ bits required in \cite{potter2019eic,anjana2020eic}. The normalized rate is $\frac{1}{2}$.
\end{example}
%
\begin{table*}
	\begin{center}
		\begin{tabular}{ |p{1cm}| p{3cm}|p{2.5cm} | p{3cm} | p{4.4cm} |}
			\hline
			
			\textit{User $S_i$}&\textit{Coded symbols transmitted by $S_i$} 	&\textit{Message demanded by $S_i$: $x_j$}& \textit{Message blocks corresponding to the message $x_j$} & \textit{Coded symbols from which the message blocks are decoded by $S_i$}   \\ \hline \hline
			
			$S_0 $ & $Y_0=x_0^0 \oplus x_3^1$ & $x_4$	& $x_4^0$ & $Y_4$  \\ \cline{4-5}
			& &&$x_4^1$ & $Y_1$  \\ \cline{3-5}
			& &$x_5$	& $x_5^0$ & $Y_5$  \\ \cline{4-5}
			& &&$x_5^1$ & $Y_2$  \\ \hline
			
			$S_1$& $Y_1=x_1^0 \oplus x_4^1$  &  $x_5$	& $x_5^0$ & $Y_5$  \\ \cline{4-5}
			& &&$x_5^1$ & $Y_2$ \\  \cline{3-5}
			& &$x_6$	& $x_6^0$ & $Y_6$  \\ \cline{4-5}
			& &&$x_6^1$ & $Y_3$  \\ \hline

			$S_2$& $Y_2=x_2^0 \oplus x_5^1$	& $x_6$	& $x_6^0$ & $Y_6$  \\ \cline{4-5}
			& &&$x_6^1$ & $Y_3$  \\ \cline{3-5}
			& &$x_0$	& $x_0^0$ & $Y_0$  \\ \cline{4-5}
			& &&$x_0^1$ & $Y_4$  \\ \hline
			
			$S_3 $ & $Y_3=x_3^0 \oplus x_6^1$& $x_0$	& $x_0^0$ & $Y_0$  \\ \cline{4-5}
			& &&$x_0^1$ & $Y_4$  \\ \cline{3-5}
			& &$x_1$	& $x_1^0$ & $Y_1$ \\ \cline{4-5}
			& &&$x_1^1$ & $Y_5$  \\ \hline
			
			$S_4$ & $Y_4=x_4^0 \oplus x_0^1$&  $x_1$	& $x_1^0$ & $Y_1$  \\ \cline{4-5}
			& &&$x_1^1$ & $Y_5$ \\ \cline{3-5}
			&& $x_2$	& $x_2^0$ & $Y_2$  \\ \cline{4-5}
			& &&$x_2^1$ & $Y_6$  \\ \hline
			
			$S_5 $ & $Y_5=x_5^0 \oplus x_1^1$& $x_2$	& $x_2^0$ & $Y_2$  \\ \cline{4-5}
			& &&$x_2^1$ & $Y_6$ \\ \cline{3-5}
			& &$x_3$	& $x_3^0$ & $Y_3$  \\ \cline{4-5}
			& &&$x_3^1$ & $Y_0$  \\ \hline
			
			$S_6$ & $Y_6=x_6^0 \oplus x_2^1$& $x_3$	& $x_3^0$ & $Y_3$  \\ \cline{4-5}
			& &&$x_3^1$ & $Y_0$  \\ \cline{3-5}
			&& $x_4$	& $x_4^0$ & $Y_4$  \\ \cline{4-5}
			& &&$x_4^1$ & $Y_2$  \\ \hline

		\end{tabular}
	\end{center}
	\caption{Table that illustrates the decoding done by each user in Example \ref{exp T1m2}}
	\label{tab2 m2}

\end{table*}

\section{Proof of Theorem \ref{thm: main result}} \label{sec achievability}
In this section, we prove the achievability of Theorem \ref{thm: main result} by providing a sub-packetization scheme. We split this problem into two disjoint cases depending on the value of $s$. We  construct code for the two cases separately in the coming subsections.  The proposed achievable schemes in both cases involve splitting the messages and transmitting their linear combination.

We split each message into $z$ blocks, $x_l =\{x_l^0, x_l^1, \ldots , x_l^{z-1}\}, l \in [0,N-1].$ The value of $z$ is given later in the coming subsections. We assume that $d$ is sufficiently large such that this splitting of message into $z$ blocks of equal sizes is possible. The size of each block is $d_1=\frac{d}{z}$ bits. Each block is from a finite field $\mathbb{F}_{2^{d_1}}.$ Each user transmits a linear combination of these blocks rather than the linear combination of the entire messages. All the users should be able to retrieve all the blocks corresponding to the demanded messages.

\subsection{Case A:
	$s >  \frac{N}{2} $. }
In this subsection,  we provide an achievable scheme for  Case A.

Let $z=\left \lceil \frac{s}{N-s} \right \rceil.$
We split each message into $z$ blocks, $x_l =\{x_l^0, x_l^1, \ldots , x_l^{z-1}\}, l \in [0,N-1].$ 
	The coded symbols transmitted are linear combinations of these blocks. 
Now, we provide the code construction.
\begin{constr} \label{con 1}
	Each user $S_j, \forall j\in [0,N-1],$  transmits one coded symbol $Y_j$, where $$Y_j =  \bigoplus_{k \in [0,z-1] } x_{(k(N-s)+j) \text{ mod } N}^{k}  $$
\end{constr}
Each user $S_{j}$ transmits one coded symbol $Y_j$ which includes $z$ messages taken at an interval of $(N-s)$. Also, $z$ different blocks of these $z$ messages are chosen, i.e., $0^{th}$ block of the first message is taken, $1^{st}$ block of the second message and so on.
Since each of the messages in $\{\cup_{k \in [0,z-1] } x_{(k(N-s)+j) \text{ mod } N} \}$ is available with the user $S_j, $ the coded symbol $Y_j $ can be transmitted by $S_j$.

We need to establish that all the users are capable of retrieving  all the demanded messages  from the coded symbols obtained by Construction \ref{con 1} and the side-information. 

\textit{Proof of Decoding:} Now, we prove that each user $S_j, j \in [0, N-1],$ can retrieve each of its  demanded message $x_l \in W_j, l \in [0,N-1]\backslash [j,(j+s-1) \text{ mod } N]$.

 It can be noted from Construction \ref{con 1} that in any coded symbol $Y_{l'}$, for some $l' \in [0,N-1],$ if we take any block of a message present in $Y_{l'}$ which is needed by some user $S_{h}, h\in [0,N-1],$ then it can safely retrieve that block from $Y_{l'}$ since all other blocks in $Y_{l'}$ are available as side-information for the user $S_{h}.$ This is since all the $z$ messages included  in $Y_{l'}$ are taken at an interval of $N-s$ and $(z-1)(N-s)<s   ($since $ z=\left \lceil \frac{s}{N-s} \right \rceil)$. Therefore, for each $i \in [0,z-1]$, the user $S_j$ can retrieve $x_l^{i}$ from $Y_{l'} $, where $l'=((l - (N-s)i) \text{ mod } N)$ as $(i+1)^{th}$ message chosen to be included in $Y_{l'} $ is $x_{l}$ and $i^{th}$ block of $x_{l}$ is chosen.
\begin{align*}
Y_{l'} &=  \bigoplus_{k \in [0,z-1] } x_{(k(N-s)+l') \text{ mod } N}^{k} \\
&=  x_{(i(N-s)+l') \text{ mod } N}\bigoplus_{k \in [0,z-1] \backslash i} x_{(k(N-s)+l') \text{ mod } N}^{k} \\
  &= x_{l }^{i} \underbrace{ \bigoplus_{k \in [0,z-1] \backslash i} x_{(k(N-s)+l') \text{ mod } N}^{k}}_{\text{available as side-information}} 
\end{align*}

\subsection{Case B:
	$s \leq \frac{N}{2}  $. }
In this subsection,  we provide an achievable scheme for  Case B. Let $z=1+\left \lceil \frac{N-s}{s-1} \right \rceil.$
We split each message into $z$ blocks, $x_l =\{x_l^0, x_l^1, \ldots , x_l^{z-1}\}, l \in [0,N-1].$ 	The coded symbols transmitted are linear combinations of these blocks. 
The code construction for this case is given below.
\begin{constr} \label{con cont 2}
	For each iteration $i \in [0,N-1]$,
	\begin{itemize}
		\item each user $S_{(k(s-1)+ i) \text{ mod } N}, k \in [0,z-2]$, transmits one coded symbol $Y_k^i$, where $$Y_k^i =  x_{(k(s-1)+i) \text{ mod } N}^k \oplus x_{((k +1) (s-1) +i) \text{ mod } N}^{k+1}  .$$
	\end{itemize}
	
\end{constr} 
The coded symbols are obtained in $N$ iterations. During each iteration $i \in [0,N-1]$, $z$ messages at an interval of $s-1$ are chosen, $\left (\bigcup_{k\in [0,z-1] }x_{(k(s-1)+ i) \text{ mod } N} \right )$, and  we make sure that $z$ different blocks of these $z$ messages are taken, i.e., $0^{th }$ block of the first message is taken, $1^{st}$ block of the second message and so on. 
And also, we choose $z-1$ users at an interval of $(s-1) $, i.e., $S_{(k(s-1)+ i) \text{ mod } N}, k \in [0,z-2]$, which are involved in the transmissions during iteration $i$, where the coded symbols obtained by each user is by taking the first and the last messages available at each user. Each user $S_{(k(s-1)+ i) \text{ mod } N}$ transmits one coded symbol $Y_k^i$ where the $k^{th} $ block of the first message $x_{(k(s-1)+ i) \text{ mod } N}$ and the $(k+1)^{th}$ block of the last message $x_{((k+1)(s-1)+ i) \text{ mod } N}$ available as side information are taken to be included in the coded symbol. 

\textit{Proof of Decoding:} We need to prove that each user $S_j, j\in[0,N-1],$ can retrieve all the blocks corresponding to all the messages in $ W_j$. Let $x_l \in W_j, l \in [0,N-1]\backslash [j,(j+s-1) \text{ mod } N]$, be some message demanded by the user $S_j$. We prove that, for each $i_1 \in [0,z-1]$, the user $S_j$ can retrieve the $i_1^{th}$ block of the demanded message $x_l$, i.e., $x_l^{i_1}$, using the transmissions done during the iteration $(l-i_1(s-1)) \text{ mod } N$.

Recall that during each iteration $i \in [0,N-1]$, 
each user $S_{(k(s-1)+ i) \text{ mod } N}$ transmits one coded symbol $Y_k^i$ where the $k^{th} $ block of the first message $x_{(k(s-1)+ i) \text{ mod } N}$ and the $(k+1)^{th}$ block of the last message $x_{((k+1)(s-1)+ i) \text{ mod } N}$ available as side information are taken to be included in the coded symbol. 
Hence, the $i_1^{th}$ block of the demanded message $x_l$, i.e., $x_l^{i_1}$, is present in the coded symbol $Y_{i_1}^{(l-i_1(s-1)) \text{ mod } N}$ since,
\begin{align}
Y_{i_1}^{(l-i_1(s-1)) \text{ mod } N}=&x_{(i_1(s-1)+(l-i_1(s-1))) \text{ mod } N}^{i_1} \oplus\\& x_{((i_1+1)(s-1)+(l-i_1(s-1))) \text{ mod } N}^{i_1+1}\\
=&x_{l}^{i_1} \oplus x_{(l+s-1) \text{ mod } N}^{i_1+1}.
\end{align}
Now, we need to find a message $x_{l'}$ in the side-information set of  the user $S_{j}$ such that  $x_{l'}$ is taken at an interval of some multiple of $(s-1)$  starting from $x_l$, i.e., $$(l'-l) \text{ mod } N =t(s-1), \text{ for some }t \in \mathbb{Z}.$$
We observe that the smallest $t$ possible such that $x_{l'}$ is in the side-information set of  the user $S_{j}$,  is $t_1=\left \lceil \frac{(j-l) \text{ mod } N}{s-1} \right \rceil$.

We prove that for each  $i_1 \in [0,z-1-t_1]$, the user $S_j$ can retrieve the block $x_l^{i_1}$ from $D_l^{i_1}, $ where, $l_1= (l-i_1(s-1)) \text{ mod } N$ and
\begin{equation*}
D_l^{i_1} =
\bigoplus_{k \in [i_1, i_1+t_1-1]}  Y_k^{l_1}.
\end{equation*}
This is since, $x_{(l+ t_1 (s-1) ) \text{ mod } N}$ is available at the user $S_j $ and 
	\begin{align*}
	D_l^{i_1} &= \bigoplus_{k \in [i_1 , i_1+t_1-1]}  Y_k^{l_1} \\
	&=  \bigoplus_{k \in [i_1 , i_1+t_1-1]}  x_{(k(s-1)+l_1) \text{ mod } N}^k \oplus\\&\hspace{1cm} x_{((k +1) (s-1) +l_1) \text{ mod } N}^{k+1} \\
	&=  x_{(i_1(s-1)+l_1) \text{ mod } N}^{i_1} \oplus x_{((i_1+t_1) (s-1) +l_1) \text{ mod } N}^{i_1+t_1} \\
	&=  x_{l}^{i_1} \oplus \underbrace{x_{(l+t_1 (s-1)) \text{ mod } N}^{i_1+t_1}. }_{\text{available as side-information}}
	\end{align*}
	Hence, the user $S_j$ can retrieve $x_{l}^{i_1}$ for each $i_1 \in [0,z-1-t_1].$
	 
	 Now, consider the case when $i_1 \in [z-t_1, z-1]$. 
	 Here also, we find a message $x_{l'}$ in the side-information set of  the user $S_{j}$ such that  $x_{l}$ is taken at an interval of some multiple of $(s-1)$  starting from $x_{l'}$, i.e., $$(l-l') \text{ mod } N =t'(s-1), \text{ for some }t' \in \mathbb{Z}.$$
	 We observe that the smallest $t'$ possible such that $x_{l'}$ is in the side-information set of  the user $S_{j}$,  is $z-t_1=z-\left \lceil \frac{(j-l) \text{ mod } N}{s-1} \right \rceil$. Also, recall that the $i_1^{th}$ block of the demanded message $x_l$, i.e., $x_l^{i_1}$, is present in the coded symbol $Y_{i_1-1}^{(l-i_1(s-1)) \text{ mod } N}$ since,
	 \begin{align}
	 Y_{i_1-1}^{(l-i_1(s-1)) \text{ mod } N}=&x_{((i_1-1)(s-1)+(l-i_1(s-1))) \text{ mod } N}^{i_1-1} \oplus\\& x_{(i_1(s-1)+(l-i_1(s-1))) \text{ mod } N}^{i_1}\\
	 =&x_{(l-(s-1)) \text{ mod } N}^{i_1-1} \oplus x_{l}^{i_1}.
	 \end{align}
	 
	 So, we prove that for each  $i_1 \in [z-t_1, z]$, the user $S_j$ can retrieve the block $x_l^{i_1}$ from $D_l^{i_1}, $ where $l_1= (l-i_1(s-1)) \text{ mod } N$ and
	 \begin{equation*}
	 D_l^{i_1} =
	 \bigoplus_{k \in [i_1-(z-t_1), i_1-1]}  Y_k^{l_1}.
	 \end{equation*}
	This is since, $x_{(l- (z-t_1) (s-1) ) \text{ mod } N}$ is available at the user $S_j $ and 
	\begin{align*}
	D_l^{i_1} &= \bigoplus_{k \in [i_1-(z-t_1), i_1-1]}  Y_k^{l_1} \\
	&= \bigoplus_{k \in [i_1-(z-t_1), i_1-1]} x_{(k(s-1)+l_1) \text{ mod } N}^k \oplus\\&\hspace{1cm} x_{((k +1) (s-1) +l_1) \text{ mod } N}^{k+1} \\
	&=  x_{((i_1-(z-t_1))(s-1)+l_1) \text{ mod } N}^{i_1-(z-t_1)} \oplus x_{(i_1 (s-1) +l_1) \text{ mod } N}^{i_1} \\
	&= \underbrace{x_{(l-(z-t_1)(s-1)) \text{ mod } N}^{i_1-(z-t_1)}}_{\text{available as side-information}} \oplus  x_{l}^{i_1}
	\end{align*}
Hence, the user $S_j$  can retrieve $x_{l}^{i_1}$ for each $i \in [z-t_1,z].$
Therefore, we proved that the user $S_j $ can retrieve the message $x_{l}$ in its demand set. Similarly, all other messages in $W_j$ can be retrieved.

\textit{Proof of Theorem \ref{thm: main result}:}
The total number of bits transmitted is $\frac{Nd}{\left \lceil \frac{s}{N-s} \right \rceil}$ bits for Case A.  Hence, the normalized rate is $\frac{1}{\left \lceil \frac{s}{N-s} \right \rceil}$.  The total number of bits transmitted is $\frac{\left \lceil \frac{N-s}{s-1} \right \rceil Nd}{1+\left \lceil \frac{N-s}{s-1} \right \rceil }$ bits for Case B. Hence, the normalized rate is $\frac{\left \lceil \frac{N-s}{s-1} \right \rceil }{1+\left \lceil \frac{N-s}{s-1} \right \rceil }$.  This completes the proof. 

\section{conclusion} \label{conclusion}
In this paper, we have explored a specific classes of EICP, namely, consecutive and symmetric EICP. We have provided code construction for this case. By efficiently  utilizing the  sub-packetization scheme, we were able to achieve a normalized rate lower than that of the state of the art \cite{potter2019eic,anjana2020eic} for some cases. For other cases, we conjecture that the normalized rate achieved using our scheme is lower than that of the state of the art \cite{potter2019eic,anjana2020eic}. 
 In this paper, we had only explored a specific class of EICP. Explicit code construction for general EICP is still open. Exploring techniques to find a general solution is an interesting thing to work on. 
\section*{Acknowledgement}
This work was supported by the Science and Engineering Research Board (SERB) of Department of Science and Technology (DST), Government of India, through J. C. Bose National Fellowship to B. Sundar Rajan. This work was done when Shanuja Sasi was at Purdue University as a visiting scholar under Science and Engineering Research Board (SERB) Overseas Visiting Doctoral Fellowship (OVDF).

\bibliographystyle{IEEEtran}

\end{document}